\theoremstyle{definition}
\newtheorem{theorem}{Theorem}
\newtheorem{proposition}[theorem]{Proposition}
\newtheorem{lemma}[theorem]{Lemma}
\newtheorem{definition}[theorem]{Definition}
\newtheorem{example}[theorem]{Example}
\newtheorem{remark}[theorem]{Remark}
\newcommand{\numberset}{\mathbb}
\newcommand{\N}{\numberset{N}}
\newcommand{\F}{\numberset{F}}
\begin{document}
\title[]{Classification of binary systematic codes of small defect}

\author{Alberto Ravagnani$^1$}

\address{Department of Mathematics, University of Neuch\^{a}tel\\
Rue Emile-Argand 11, CH-2000 Neuch\^{a}tel (Switzerland)}

\email{alberto.ravagnani@unine.ch$^1$}

\subjclass[2010]{11T71; 68P30}
\keywords{Non-linear code, systematic code, binary code, AMDS code}
\maketitle

\begin{abstract}
 In this paper non-trivial non-linear binary
systematic AMDS codes are classified in terms 
of their  weight distributions, employing only elementary techniques. In
particular, we show that 
their length and minimum distance completely determine the weight distribution.
\end{abstract}

\section{Introduction} \label{intr}

Let $q$ be a prime power and let $\F_q$ denote the finite field with $q$
elements.
 A (non-linear) \textbf{code} of lenght $n \in \N_{\ge 1}$ over the field $\F_q$
is  a subset 
$C \subseteq \F_q^n$ with at least two elements. 
We omit the adjective \textit{non-linear} for the rest of the paper.  A code $C
\subseteq \F_q^n$ is said to be \textbf{linear} if it is a
vector subspace of $\F_q^n$. Define the  \textbf{Hamming distance} on
$\F_q^n$ by
$d:\F_q^n \times \F_q^n \to \N$ with $$d(v,w):= |\{1 \le i \le n : v_i \neq
w_i\}|,$$ where $v=(v_1,...,v_n)$ and $w=(w_1,...,w_n)$.
The \textbf{minimum distance} $d(C)$ of a code $C \subseteq \F_q^n$ is
the
integer
$$d(C):= \min \{ d(v,w) : v,w \in C, v \neq w \}.$$
 A code of lenght $n$, $|C|$ codewords and minimum distance $d$ is said to be 
of \textbf{parameters} $[n,|C|,d]$.
The \textbf{weight} of a vector $v \in \F_q^n$ 
is the integer
$\mbox{wt}(v):=|\{ 1 \le i \le n : v_i \neq 0\}|$, where $v=(v_1,...,v_n)$. 
Let $C \subseteq \F_q^n$ be a code
containing zero. For any $i \in \N$ such that  $0
\le i \le n$ we denote by
$W_i(C)$ the number of codewords in $C$ having weight exactly $i$. The
collection
 $\{W_i(C) \}_{0 \le i \le n}$ is said to be the \textbf{weight distribution} of
$C$, and the $W_i(C)$'s are called \textbf{weights}.   The following bound is
well-known
(\cite{Singleton}, Theorem 1).

\begin{proposition}[Singleton bound]\label{sb}
 Let $C \subseteq \F_q^n$ be a code of minimum distance $d$. Then $|C| \le
q^{n-d+1}$.
\end{proposition}

\begin{definition}A code $C \subseteq \F_q^n$ which attains the
Singleton bound is said to be an \textbf{MDS} codes 
(MDS stands for \textit{Maximum Distance Separable}).
  A code $C \subseteq \F_q^n$ with minimum distance $d$ and cardinality
$q^{n-d}$ 
 is said to be an \textbf{AMDS} code (AMDS stands for \textit{Almost} MDS). 
 \end{definition}

 Hence, AMDS codes are codes that almost reach the Singleton bound. They
have been
introduced and studied for the first time in \cite{B}.

The remainder of the paper is organized as follows. Section \ref{Prel} contains
some preliminary results on the parameters of MDS and AMDS binary codes. We
introduce systematic codes in Section \ref{param}, where we also classify binary
systematic AMDS codes with respect to their parameters. In Section
\ref{sectionW} we prove that length and minimum distance completely determine
the weight distribution of such codes, and compute them explicitly.
 
 \section{Preliminaries} \label{Prel}

First, as an application of the well-known Hamming bound (\cite{NTV}, Theorem
1.1.47),
we  prove powerful
restrictions on the
size of MDS and AMDS codes.

\begin{proposition} \label{restrizioni}
 Let $C \subseteq \F_q^n$ be a code of minimum distance $d \ge 3$ and $|C|=q^k$
words.
\begin{enumerate}
 \item If $C$ is an MDS code, then $k \le q-1$.
\item If $C$ is an AMDS code, then $k \le q^2+q-2$.
\end{enumerate}
\end{proposition}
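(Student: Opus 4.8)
The plan is to apply the Hamming bound not to $C$ directly, but to a companion code obtained from $C$ by lowering its minimum distance to the base value $3$ through repeated puncturing; this way a single crude instance of the bound suffices for both parts.

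First I would record the precise form of the Hamming bound I intend to use: for a code $C \subseteq \F_q^n$ of minimum distance $d$ and $t := \lfloor (d-1)/2 \rfloor$, one has $|C| \cdot \sum_{i=0}^{t} \binom{n}{i}(q-1)^i \le q^n$. Since $d \ge 3$ forces $t \ge 1$, keeping only the terms $i = 0$ and $i = 1$ yields the weaker but convenient inequality $|C|\,(1 + n(q-1)) \le q^n$. A direct application of this (or even of the full bound) to $C$ itself does not give the stated restrictions when $d$ is large, which is exactly why the reduction below is needed.

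The central device is a puncturing lemma valid for arbitrary (non-linear) codes: deleting one coordinate from a code of minimum distance $\ge 2$ preserves its cardinality, since no two codewords can agree off a single coordinate, and it lowers the minimum distance by at most $1$. I would delete a coordinate lying in the support of the difference of a fixed pair of codewords realizing the minimum distance; then the punctured code has minimum distance exactly $d - 1$. A one-line comparison with the Singleton bound (Proposition \ref{sb}) shows that puncturing an MDS (resp.\ AMDS) code in this way again produces an MDS (resp.\ AMDS) code, of the same cardinality $q^k$ and of minimum distance $d - 1$. Iterating this $d - 3$ times, starting from $C$, reduces to a code $C'$ of the same cardinality $q^k$, minimum distance $3$, and length $N = n - (d-3)$, which is MDS (resp.\ AMDS) precisely when $C$ is.

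It then remains to apply the Hamming bound with $t = 1$ to $C'$: from $q^k\,(1 + N(q-1)) \le q^N$ we get $1 + N(q-1) \le q^{N-k}$. In the MDS case $N - k = 2$, so $N(q-1) \le q^2 - 1 = (q-1)(q+1)$ gives $N \le q+1$ and hence $k = N - 2 \le q - 1$. In the AMDS case $N - k = 3$, so $N(q-1) \le q^3 - 1 = (q-1)(q^2 + q + 1)$ gives $N \le q^2 + q + 1$ and hence $k = N - 3 \le q^2 + q - 2$. The only delicate point is the puncturing lemma for non-linear codes — in particular verifying that the cardinality is preserved and that the minimum distance drops by exactly one for the chosen coordinate — whereas the final estimates are entirely routine.
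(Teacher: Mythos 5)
Your proposal is correct and follows essentially the same route as the paper: delete $d-3$ coordinates to obtain a code of the same cardinality $q^k$ and minimum distance at least $3$, then apply the Hamming bound with covering radius parameter $t=1$ and read off the bounds on $k$ from the exponent gap ($2$ in the MDS case, $3$ in the AMDS case). The only difference is bookkeeping: you puncture one coordinate at a time and track that the minimum distance drops by exactly one and that the MDS/AMDS property is preserved, whereas the paper removes the $d-3$ coordinates in one step and only needs that the resulting code has minimum distance at least $3$ and unchanged size; both yield the same final inequality.
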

\begin{proof}
Set $s(C):=n-d-k+1$, so that $s(C)=0$ if $C$ is MDS, and $s(C)=1$ if $C$
is  AMDS. Remove from the codewords of $C$ the last $d-3$ components, obtaining
a code, say $D$, of lenght $n-d+3$, minimum distance at least $3$, and
$|C|=q^k$ codewords. Applying the Hamming bound to
 $D$ we get 
$q^{n-d+1-s(C)} \left[ 1+(n-d+3)(q-1)\right] \le q^{n-d+3}$. 
Straightforward computations give the thesis.
\end{proof}

The following classification of binary MDS codes is a well-known result in
coding theory (see for instance \cite{LX}, Problem 5.32). Another proof using
different techniques can be found in \cite{OS}.

\begin{theorem}[Classification of binary MDS codes]\label{thhh} Let $C
\subseteq \F_2^n$ be any MDS code of minimum distance $d$. Then, up to
traslation, $C$ is one of the following MDS codes.
\begin{enumerate}
\item The $n$-times repetition code, with $d=n \ge 3$.
\item  The parity-check code of the code $\F_2^k$, $k=n-1$.
\item The code $\F_2^n$.
\end{enumerate}
\end{theorem}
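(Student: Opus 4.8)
The plan is to classify binary MDS codes by exploiting the extreme constraints that the MDS property imposes when $q=2$, combined with the size restriction from Proposition \ref{restrizioni}. Let me think about what an MDS code over $\F_2$ looks like.

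An MDS code has $|C| = q^{n-d+1} = 2^{n-d+1}$, so $k = n-d+1$. Proposition \ref{restrizioni} tells us that if $d \geq 3$, then $k \leq q - 1 = 1$, meaning $|C| \leq 2$. So for $d \geq 3$, we have $k \leq 1$, i.e. $|C| = 2$ (since a code has at least two elements). With $|C| = 2^{n-d+1} = 2$, we get $n - d + 1 = 1$, so $d = n$. A two-element code with minimum distance $d = n$ means the two codewords differ in all $n$ positions. Up to translation (subtract one codeword), we can assume one codeword is $0$, and the other has weight $n$, i.e. is the all-ones vector. This is exactly the $n$-times repetition code. And we need $d = n \geq 3$, matching case (1).

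The remaining cases are $d = 1$ and $d = 2$. For $d = 1$: MDS means $|C| = 2^{n-1+1} = 2^n$, so $C = \F_2^n$, which is case (3). For $d = 2$: MDS means $|C| = 2^{n-2+1} = 2^{n-1}$, so $C$ is a code of half the full size with minimum distance $2$. The parity-check code (all even-weight vectors, equivalently the image of $\F_2^{n-1}$ under appending a parity bit) has exactly $2^{n-1}$ elements and minimum distance $2$. The claim is that up to translation this is the only such code.

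So the structure of the proof is:

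First I would reduce to the linear case. A key subtlety is the phrase "up to translation." I would note that translating a code by a fixed vector preserves its cardinality and its minimum distance, and that every code containing $0$... but actually the code need not contain $0$. The idea is that $C$ is MDS iff any translate $C + v$ is MDS. So I may translate $C$ so that $0 \in C$ and then analyze the weight distribution. The hard part is the $d=2$ case: showing that a binary code with $0 \in C$, $|C| = 2^{n-1}$, and $d = 2$ must equal the even-weight (parity-check) code.

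Let me sketch the argument for $d = 2$. After translating so $0 \in C$, the code has minimum distance $2$, so no codeword has weight $1$ (a weight-$1$ codeword would be at distance $1$ from $0$). I want to argue that $C$ is precisely the set of even-weight vectors. A clean approach: consider the "parity map" $\pi: \F_2^n \to \F_2$ sending $v$ to the sum of its coordinates. The even-weight vectors form the kernel $E$, which has size $2^{n-1}$. Since $0 \in C$ and $|C| = 2^{n-1} = |E|$, it suffices to show $C \subseteq E$, i.e. every codeword has even weight. Suppose some $c \in C$ has odd weight. The main obstacle is ruling this out purely combinatorially.

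Here is how I would finish. The key counting tool is that a binary code of length $n$ with minimum distance $2$ and $2^{n-1}$ codewords is a perfect-packing-type extremal object: the $2^{n-1}$ words are pairwise at distance $\geq 2$, and the total number of words in $\F_2^n$ is $2^n$, so the Hamming balls of radius... Actually, with $d=2$, balls of radius $0$ (just the points) being disjoint gives only $|C| \leq 2^n$, which is not tight. The right observation is combinatorial. Consider partitioning $\F_2^n$ into $2^{n-1}$ pairs $\{w, w + e_n\}$ where $e_n$ is the last unit vector. Since $d(C) = 2$, each pair contains at most one codeword of $C$; as $|C| = 2^{n-1}$ equals the number of pairs, each pair contains exactly one codeword. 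This means $C$ is a system of coset representatives, and in fact one can show the coordinate projection $\F_2^n \to \F_2^{n-1}$ forgetting the last coordinate is a bijection from $C$; so $C = \{(x, f(x)) : x \in \F_2^{n-1}\}$ for a function $f: \F_2^{n-1} \to \F_2$. The minimum distance $2$ condition then forces $f(x) + f(y) \neq 0$ whenever $x, y$ differ in exactly one position, i.e. $f$ changes value across every edge of the hypercube, which forces $f(x) = \pi(x) + \text{const}$; after translation the constant vanishes and $C$ is the parity-check code. The main obstacle is precisely this last step — arguing that the edge-condition on the hypercube pins down $f$ up to the affine parity function — and I would handle it by a connectivity/induction argument on the $(n-1)$-cube, or equivalently by showing $f(x) + \pi(x)$ is constant since it is invariant under single-coordinate flips.
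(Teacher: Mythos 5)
Your proof is correct, but note that the paper does not actually prove Theorem \ref{thhh}: it is stated as a known result with references to \cite{LX} (Problem 5.32) and \cite{OS}, so there is no internal proof to compare against. Your argument is a legitimate, self-contained, and elementary way to fill that gap, and it fits the paper's toolkit well: the case $d\ge 3$ is dispatched exactly as the paper would want, by invoking Proposition \ref{restrizioni}(1) to force $k\le 1$, hence $|C|=2$ and $d=n$, giving the repetition code up to translation; the case $d=1$ is immediate; and the case $d=2$ is handled by the pairing $\{w,w+e_n\}$, which shows $C$ is the graph of a function $f:\F_2^{n-1}\to\F_2$ that must change value across every edge of the hypercube, so that $f+\pi$ is locally constant and hence constant by connectivity of the cube, pinning $C$ down as the even-weight code after translation. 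All steps check out (the pigeonhole count $|C|=2^{n-1}=$ number of pairs, the reduction of the distance condition to adjacent pairs $x,y$ with $d(x,y)=1$, and the normalization $f(0)=0$ forcing the constant to vanish). The only cosmetic remark is that your intermediate observation about weight-one codewords is not actually used in the final argument; the pairing argument supersedes it. Compared with the cited sources, your route has the advantage of using only the Hamming-bound consequence already proved in Section \ref{Prel} plus a connectivity argument on the cube, which is consistent with the paper's stated aim of employing only elementary techniques.
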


Proposition \ref{restrizioni} and Theorem \ref{thhh} will be employed in the
following sections to determine the possible parameters and weight distributions
of binary systematic AMDS codes.

\begin{definition}
Codes $C,D$ over a finite field $\F_q$ are said to be \textbf{$P$-equivalent} if
they
have the same parameters, i.e., $[n(C),|C|,d(C)]=[n(D), |D|, d(D)]$. Codes
$C,D$
over $\F_q$ and containing zero are said to be
\textbf{$W$-equivalent} if
they have the same lenght and the same weight distribution.
\end{definition}
\begin{remark}\label{fatti}
 Notice
that if $C$ and $D$ are
linear
$W$-equivalent codes, then they are also $P$-equivalent. For non-linear codes
containing zero, this result is not true in general (see Example
\ref{esempietto}).
\end{remark}

\begin{example} \label{esempietto}
 The binary codes $\{00000, 11001, 00111\}$ and $\{00000, 10011, 11001\}$
contain zero, have the same weight distribution, and different minimum
distances.
\end{example}

\section{Parameters of binary systematic AMDS codes} \label{param}

Here we study binary systematic AMDS
codes providing a classification  in terms of their parameters. 
Let us briefly recall the definition of systematic code.

\begin{definition}\label{sys}
 Let $n$ be a positive integer and $q$ a prime power. A code $C \subseteq
\F_q^n$ 
is said to be \textbf{systematic} if there exists a function
 $\varphi:\F_q^k \to \F_q^{n-k}$ such that:
 \begin{enumerate}
  \item $\varphi(0)=0$,
  \item $C=\{ (v,\varphi(v)) : v \in \F_q^k\}$.
   \end{enumerate}
The function $\varphi$ is a
 \textbf{systematic encoding function}. A code $C$ as in the definition
 has $q^k$ codewords.
  \end{definition}

\begin{remark}
 Notice that condition (1) is not always required in the definition of
systematic code in the literature. On the other hand, up to a translation, we
can always assume that (1) holds without loss of generality.
\end{remark}

Systematic codes turn out
to be very useful in the applications (see \cite{pre} and \cite{pre1} among
others) and
powerful bounds on their parameters have been recently discovered (see e.g. 
\cite{BGS}).

We study first binary systematic AMDS codes of minimum distance
one and two, providing a characterization.

\begin{proposition} \label{AMDSd=1}
 A code $C \subseteq \F_2^n$ of minimum distance $d=1$ and $2^k$ codewords ($k
\ge 1$) is systematic  and AMDS if and only if there
exists a function
 $\psi:\F_2^k \to \F_2$ with the following properties:
\begin{enumerate}
 \item[(a)] $\psi(0)=0$,
\item[(b)] $\psi$ is not the parity-check function,
\item[(c)] $C= \{ (v,\psi(v)) : v \in \F_2^k \}$.
\end{enumerate}
As a consequence, for any $n \ge 2$, there are $2^{2^{n-1}-1}-1$ such codes.
\end{proposition}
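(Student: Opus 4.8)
The plan is to reduce everything to the elementary observation that, for a systematic binary code with a single parity symbol, the minimum distance is governed entirely by the behaviour of the encoding function under single-coordinate flips.

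First I would unwind the two definitions in the present setting. A systematic code with $2^k$ codewords comes with an encoding function $\varphi : \F_2^k \to \F_2^{n-k}$ and satisfies $|C| = 2^k$, so the systematic dimension equals $k$. If in addition $C$ is AMDS with $d(C) = 1$, then $|C| = 2^{n-d} = 2^{n-1}$ forces $k = n-1$, whence $n - k = 1$ and $\varphi$ takes values in $\F_2$. Thus $\varphi$ is exactly a function $\psi := \varphi : \F_2^k \to \F_2$ with $\psi(0) = 0$, which gives (a) and (c) immediately. Conversely, any $\psi$ as in (a) and (c) produces a systematic code of length $n = k+1$ with $|C| = 2^k = 2^{n-1}$ codewords; once we know its minimum distance is $1$, it is automatically AMDS since $2^{n-d} = 2^{n-1} = |C|$.

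The crux is therefore to compute $d(C)$ for a code of the form $C = \{(v, \psi(v)) : v \in \F_2^k\}$. For distinct $v, w \in \F_2^k$ the two corresponding codewords are at Hamming distance $d(v,w) + \delta$, where $\delta = 1$ if $\psi(v) \neq \psi(w)$ and $\delta = 0$ otherwise. Taking $v$ and $w$ to differ in a single coordinate (possible since $k \ge 1$) shows $d(C) \le 2$ always, so $d(C) \in \{1,2\}$. The key step is then the claim that $d(C) \ge 2$ holds precisely when $\psi$ is the parity-check function. Indeed, $d(C) \ge 2$ is equivalent to $\psi(v) \neq \psi(v + e_i)$ for every $v \in \F_2^k$ and every standard basis vector $e_i$, i.e. to the condition that flipping any single input bit flips the output. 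A Boolean function with all first differences $\psi(v + e_i) - \psi(v)$ constantly equal to $1$ must satisfy $\psi(v) = v_1 + \cdots + v_k + \psi(0)$; since $\psi(0) = 0$ this is exactly the parity-check function. This is the main obstacle, though it is only a short algebraic argument over $\F_2$.

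Combining these, $d(C) = 1$ if and only if $\psi$ is not the parity-check function, which is condition (b); this closes both implications of the equivalence. For the count, once the characterization is established the assignment $\psi \mapsto C$ is a bijection onto the codes in question, since the first $k$ coordinates of $C$ range bijectively over $\F_2^k$ and recover $\psi$ as the last coordinate. With $k = n-1$, the admissible functions are those with $\psi(0) = 0$ that are not the parity-check function: there are $2^{2^{n-1}-1}$ functions fixing $\psi(0)=0$ (free choice on the remaining $2^{n-1}-1$ inputs), and removing the single parity-check function leaves $2^{2^{n-1}-1} - 1$, as claimed.
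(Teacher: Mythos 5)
Your proof is correct, and its key step differs from the paper's. Both arguments handle the forward implication the same way: if $\psi$ were the parity-check function, every single-bit flip in the input would flip the output, forcing all pairs of codewords to lie at distance at least two. Where you diverge is in the converse. The paper observes that $1 \le d(C) \le 2$, notes that $d(C)=2$ would make $C$ an MDS code of parameters $[k+1,2^k,2]$, and then invokes the classification of binary MDS codes (Theorem \ref{thhh}) to derive a contradiction (implicitly: $C$ would have to coincide with the parity-check code, so $\psi$ would be the parity-check function, violating (b)). You instead prove directly that $d(C)\ge 2$ holds if and only if all first differences $\psi(v+e_i)-\psi(v)$ equal $1$, and that the only such function with $\psi(0)=0$ is the parity-check function, by telescoping along a path from $0$ to $v$. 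Your route is more self-contained and elementary --- it needs neither the Singleton bound nor the MDS classification --- and it yields both implications at once from the single equivalence ``$d(C)=2$ iff $\psi$ is the parity check.'' The paper's route is shorter on the page because Theorem \ref{thhh} is already available as a black box. You also make explicit the bijection $\psi \mapsto C$ underlying the count $2^{2^{n-1}-1}-1$, which the paper states without justification.
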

\begin{proof}
 Assume that $C$ is systematic and AMDS. Let $\psi:=\varphi$, the encoding
function of Definition \ref{sys}. We clearly have
 $\psi(0)=0$. By contradiction, assume that $\varphi$ is the parity-check
function. Consider two
vectors
 $v,w \in \F_2^k$ such that $d(v,w) = 1$. We have $\mbox{wt}(v) \not
\equiv \mbox{wt}(w)
 \mbox{ (mod 2)}$ and $d((v,\varphi(v)),(w,\varphi(w)))=2$. This proves that
the minimum 
distance of $C$ is two, a contradiction. Now assume that $\psi:\F_2^k \to
\F_2$ satisfies the hypothesis. We
need to prove that the code 
$C:=\{(v,\psi(v)) : v \in \F_2^k\}$ is AMDS. 
By contradiction, assume that $C$ is not AMDS. Since $d(C)$
 trivially satisfies $1 \le d(C) \le 2$, $C$ has $2^k$ elements and length 
$k+1$,
we have that $C$ is an MDS 
code. This contradicts Theorem \ref{thhh}.
\end{proof}

\begin{proposition} \label{AMDSd=2}
The following facts hold.
\begin{enumerate}
 \item A code $C \subseteq \F_2^n$ of minimum distance $d=2$ and $2^k$ elements
($k \ge 2$)
is systematic and AMDS  if and only if there exists a function
$\psi:\F_2^k \to \F_2^2$ such that:
\begin{enumerate}
 \item $\psi(0)=0$,
\item $\psi(v) \neq \psi(w)$ for any $v,w \in \F_2^k$ such that
$d(v,w)=1$,
\item $C= \{ (v,\psi(v)) : v \in \F_2^k \}$.
\end{enumerate}
\item A subset $C\subseteq \F_2^n$ is an AMDS systematic code with two elements
 if and only if it is of the form $\{ 0,(1,v)\}$ with $n \ge
2$, $v \in \F_2^{n-1}$ and $\mbox{wt}(v)=n-2$.
\end{enumerate}
\end{proposition}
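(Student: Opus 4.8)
The plan is to handle both statements by first extracting the length from the AMDS condition. Since a binary AMDS code with $2^k$ codewords and minimum distance $d$ satisfies $2^k = 2^{n-d}$, one has $n = k+d$ throughout. For part (1) this gives $n = k+2$, so that the systematic encoding function of Definition \ref{sys} is a map $\F_2^k \to \F_2^{n-k} = \F_2^2$, which is precisely the target of $\psi$.

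For the forward implication of (1), I would set $\psi := \varphi$. Properties (a) and (c) are then immediate from the definition of a systematic code. For (b) I argue by contradiction in the style of Proposition \ref{AMDSd=1}: if $d(v,w)=1$ but $\psi(v)=\psi(w)$, then the codewords $(v,\psi(v))$ and $(w,\psi(w))$ differ in a single coordinate, so $d(C)=1$, contradicting $d(C)=2$.

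For the converse of (1), given $\psi$ satisfying (a)--(c), the set $C$ is systematic with $2^k$ codewords and length $k+2$, so it suffices to verify $d(C)=2$. The lower bound $d(C)\ge 2$ follows by distinguishing the cases $d(v,w)\ge 2$ (already enough) and $d(v,w)=1$ (where (b) forces $\psi(v)\neq\psi(w)$, contributing a second differing coordinate). The upper bound $d(C)\le 2$ is the main obstacle, and here I would invoke the MDS classification: if $d(C)\ge 3$, the Singleton bound applied to $|C|=2^{n-2}$ forces $d(C)=3$ with $C$ an MDS code, and by Theorem \ref{thhh} the only binary MDS code of minimum distance at least $3$ is the repetition code, which has $n=d=3$ and hence $k=1$, contradicting the hypothesis $k\ge 2$. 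Thus $d(C)=2$ and $C$ is AMDS.

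Part (2) is then comparatively routine. Having two elements means $k=1$, and the AMDS relation $n=k+d$ gives $d=n-1$, with $n\ge 2$ since $d\ge 1$. A systematic code with $k=1$ and $\varphi(0)=0$ is necessarily of the form $\{0,(1,v)\}$ with $v=\varphi(1)\in\F_2^{n-1}$, and its minimum distance equals $\mbox{wt}((1,v))=1+\mbox{wt}(v)$. Requiring this to equal $n-1$ is equivalent to $\mbox{wt}(v)=n-2$, which yields both implications simultaneously.
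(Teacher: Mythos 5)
Your proposal is correct and follows essentially the same route as the paper: take $\psi=\varphi$ for the forward direction, and for the converse bound $d(C)$ between $2$ and $3$ via condition (b) and the Singleton bound, then exclude $d(C)=3$ by the classification of binary MDS codes (Theorem \ref{thhh}). You merely spell out the steps the paper labels ``easily checked'' and ``immediate,'' including the explicit verification that $n=k+2$ and the weight computation in part (2), all of which are accurate.
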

\begin{proof}
 If $C$ is systematic and AMDS, let $\psi:=\varphi$, the encoding function
of  Definition \ref{sys}. Properties (a), (b) and (c) are easily checked.
 On the other hand, assume that $\psi:\F_2^k \to \F_2^2$ satisfies
 (a), (b), (c). By (a), the code $C:= \{ (v,\psi(v)) : v \in \F_2^k
\}$ 
is systematic. By (b), the code $C$ has not minimum distance
one. By the Singleton bound, we have $d(C) \in \{ 2,3\}$.
 If $C$ has minimum distance three, then it is an MDS code.
 On the other hand, Theorem \ref{thhh} states that a binary MDS code of 
parameters $[k+2, 2^k,3]$ does not exist ($k \ge 2$). 
Hence $d(C)=2$ and $C$ is AMDS. The last part of the claim is immediate.
\end{proof}

Now we focus on the $P$-classification of binary systematic AMDS
codes with minimum distance at least three.  We start by proving that the
size of any such a code has to be very small.

\begin{lemma}\label{solicasi}
 Let $C$ be a binary systematic AMDS code of minimum distance
$d \ge 3$ and length $n$. 
 Then $k:=n-d \in \{ 1,2,3,4\}$.
Moreover, if $k\in \{2,3,4\}$, then $d \in \{ 3,4\}$.
\end{lemma}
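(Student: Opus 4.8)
The plan is to split the claim into its two assertions: first the range $k\in\{1,2,3,4\}$, and then the refinement that $d\in\{3,4\}$ when $k\ge 2$. The first assertion is almost immediate from the results already available. Since $C$ is binary and AMDS we have $|C|=2^{n-d}=2^{k}$, and a code has at least two elements by definition, so $2^{k}\ge 2$ gives $k\ge 1$. For the upper bound I would simply invoke Proposition \ref{restrizioni}(2) with $q=2$, which gives $k\le q^{2}+q-2=4$. This settles $k\in\{1,2,3,4\}$ with no computation.

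For the second assertion I would exploit the systematic structure directly. Write $C=\{(v,\varphi(v)):v\in\F_2^{k}\}$ with $\varphi\colon\F_2^{k}\to\F_2^{n-k}$ and $\varphi(0)=0$, and note that $n-k=d$, so each redundancy block $\varphi(v)$ lives in $\F_2^{d}$. The key observation concerns the weight-one messages: for a standard basis vector $e_i$ with $1\le i\le k$, the codeword $(e_i,\varphi(e_i))$ is nonzero, hence has weight at least $d$; since $\mbox{wt}(e_i)=1$ this forces $\mbox{wt}(\varphi(e_i))\ge d-1$. Because $\varphi(e_i)\in\F_2^{d}$, every such redundancy block differs from the all-ones vector $\mathbf{1}\in\F_2^{d}$ in at most one coordinate.

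Now assume $k\ge 2$ and choose distinct indices $i\neq j$. The codewords $(e_i,\varphi(e_i))$ and $(e_j,\varphi(e_j))$ are distinct, so their Hamming distance is at least $d\ge 3$; on the other hand this distance equals $2+d(\varphi(e_i),\varphi(e_j))$, since $d(e_i,e_j)=2$. As both $\varphi(e_i)$ and $\varphi(e_j)$ lie within Hamming distance one of $\mathbf{1}$, the triangle inequality yields $d(\varphi(e_i),\varphi(e_j))\le 2$, so the codeword distance is at most $4$. Combining the two estimates gives $d\le 4$, and together with the standing hypothesis $d\ge 3$ this forces $d\in\{3,4\}$.

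I do not expect a genuine obstacle here: the entire content sits in the last paragraph, namely the realization that the encodings of the weight-one messages are pinned near the all-ones vector and therefore cannot be far apart. It is worth noting that the case $k=2$ requires no separate treatment, since a single pair $(e_i,e_j)$ already produces the bound $d\le 4$; the only point to double-check while writing is that the weight computation $\mbox{wt}\big((e_i,\varphi(e_i))\big)=1+\mbox{wt}(\varphi(e_i))$ and the distance decomposition are used consistently.
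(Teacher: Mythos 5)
Your proof is correct, and while the first assertion is handled exactly as in the paper (Proposition \ref{restrizioni}(2) with $q=2$ gives $k\le 4$, and $|C|\ge 2$ gives $k\ge 1$), your second assertion takes a genuinely different and more elementary route. The paper derives $d\in\{3,4\}$ by citing the Plotkin bound (which gives $d\in\{3,4\}$ for $k\in\{2,3\}$ and $d\in\{3,4,5\}$ for $k=4$) and then eliminating the pair $(k,d)=(4,5)$ with the Hamming bound. You instead exploit the systematic structure directly: since the code contains $0$ and each $(e_i,\varphi(e_i))$ has weight at least $d$, every redundancy block $\varphi(e_i)\in\F_2^{d}$ lies within Hamming distance one of the all-ones vector, so for $i\ne j$ the triangle inequality gives $d\le d\bigl((e_i,\varphi(e_i)),(e_j,\varphi(e_j))\bigr)=2+d(\varphi(e_i),\varphi(e_j))\le 4$. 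All steps check out (the additivity of the Hamming distance over the two blocks and the use of $\varphi(0)=0$ are both sound). What your approach buys is self-containedness and uniformity: it needs no external bounds beyond Proposition \ref{restrizioni}, and it handles all of $k\in\{2,3,4\}$ in one stroke rather than splitting into cases; what it gives up is generality, since it relies on the code being systematic, whereas the Plotkin/Hamming route would constrain arbitrary codes with these parameters. For this lemma, which is stated only for systematic codes, that is no loss.
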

\begin{proof}
 Proposition \ref{restrizioni} gives $k \in \{ 1,2,3,4\}$. If $k \in \{ 2,3
\}$, then
the Plotkin bound (\cite{NTV}, Theorem 1.1.45)
implies $d \in \{ 3,4\}$. If $k=4$, then the same bound gives $d \in \{3,4,5\}$.
The case $k=4$ and $d=5$ is ruled out by the Hamming bound (\cite{NTV}, Theorem
1.1.47).
\end{proof}

\begin{theorem}[$P$-classification]\label{thhh1}
 Let $C$ be a binary systematic AMDS code of length $n$ and minimum distance
$d$. Then 
$(n,d)$ is one of the following pairs:
\begin{enumerate}
\begin{multicols}{2}
\item[(a)] $(n,1)$, with $n \ge 3$,
\item[(b)] $(n,2)$ with $n \ge 4$,
\item[(c)] $(d+1,d)$ with $d \ge 1$,
\item[(d)] $(5,3)$, 
\item[(e)] $(6,4)$,
\item[(f)] $(6,3)$, 
\item[(g)] $(7,4)$, 
\item[(h)] $(7,3)$, 
\item[(i)] $(8,4)$.
\end{multicols}
\end{enumerate}
Moreover, for any such a pair $(n,d)$ there exists a binary systematic AMDS
code of length $n$ and minimum distance $d$. 
\end{theorem}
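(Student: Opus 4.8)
The plan is to split the proof into two parts: first showing that any binary systematic AMDS code must have $(n,d)$ among the listed pairs, and then exhibiting an explicit code realizing each pair.

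For the forward direction, I would organize the argument by minimum distance $d$. The cases $d=1$ and $d=2$ are handled directly: for $d=1$, Proposition \ref{AMDSd=1} characterizes such codes, and since $k=n-d=n-1\ge 1$ requires $n\ge 2$, I would check that a valid choice of $\psi$ (non-parity-check) exists precisely when $n\ge 3$, giving family (a); similarly Proposition \ref{AMDSd=2} with the condition that a separating $\psi:\F_2^k\to\F_2^2$ exists forces $k\ge 2$, hence $n\ge 4$, giving family (b). For $d\ge 3$, I would invoke Lemma \ref{solicasi}, which already restricts us to $k:=n-d\in\{1,2,3,4\}$ and, when $k\in\{2,3,4\}$, to $d\in\{3,4\}$. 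The case $k=1$ yields $n=d+1$ with $d\ge 3$, which (together with the small $d$ cases) is absorbed into family (c) $(d+1,d)$ with $d\ge 1$. The remaining finitely many pairs come from $k\in\{2,3,4\}$ and $d\in\{3,4\}$: writing $n=k+d$, these are exactly $(5,3),(6,4),(6,3),(7,4),(7,3),(8,4)$, matching (d)--(i). Thus the forward direction is essentially bookkeeping on top of Lemma \ref{solicasi}.

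The substantive work is the converse: constructing, for each listed pair, an actual binary systematic AMDS code. For (a) and (b) the existence follows from Propositions \ref{AMDSd=1} and \ref{AMDSd=2} by simply exhibiting a suitable encoding function $\psi$ (e.g. for (a), take $\psi$ to agree with the parity-check function except at one nonzero input). For family (c), the pair $(d+1,d)$ corresponds to $k=1$, so a two-element code $\{0,(1,v)\}$ of length $d+1$; I would take $v\in\F_2^d$ of weight $d-1$, so the nonzero codeword has weight $d$, giving minimum distance $d$ and cardinality $2=2^{(d+1)-d}$, hence AMDS. The genuinely finite, case-by-case part is (d)--(i), where I must produce explicit systematic codes of the stated small parameters, each verified to have the correct minimum distance and cardinality $2^{n-d}$.

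The main obstacle will be these six explicit constructions: for each pair I need a concrete systematic encoding function $\varphi:\F_2^k\to\F_2^{d}$ whose resulting code has minimum distance exactly $d$ (not larger, to avoid being MDS, and not smaller) and cardinality $2^k$ with $n-d=k$. The cleanest approach is to exhibit each code explicitly by listing its codewords or by giving $\varphi$ as a small table, then verify the minimum distance by direct inspection of pairwise distances. Because $k\le 4$, each code has at most $16$ codewords, so the verification is finite but must be done carefully for each of the six cases; I would present compact generating descriptions and note that the minimum-distance check is a routine (if tedious) computation. This completes both directions and establishes the classification.
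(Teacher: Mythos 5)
Your proposal follows essentially the same route as the paper: the forward direction is the same bookkeeping over Proposition \ref{AMDSd=1}, Proposition \ref{AMDSd=2} and Lemma \ref{solicasi}, and the converse is settled by explicit constructions for the finitely many remaining pairs. The only notable differences are that the paper shortens the construction step by observing that appending a parity-check bit turns a systematic AMDS code of parameters $(n,d)$ with $d\ge 3$ odd into one of parameters $(n+1,d+1)$, so it only needs explicit generator matrices for $(5,3)$, $(6,3)$ and $(7,3)$ (the last being the Hamming code), whereas you plan to exhibit all six small cases directly (a sound but unfinished step, since the codes are not actually written down); also, your claim that a valid $\psi$ for $d=1$ exists precisely when $n\ge 3$ is slightly off ($\psi\equiv 0$ works for $n=2$), though this is harmless because the pair $(2,1)$ is already covered by family (c).
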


\begin{proof}
Assume that $(n,d)$ are the length and the minimum distance of a binary systematic AMDS code.
Combining Proposition \ref{AMDSd=1}, Proposition \ref{AMDSd=2} and Lemma \ref{solicasi} we
easily see that $(n,d)$ must be one of the pairs in the list.

We need to show that, for any pair $(n,d)$ in the list, there
exists a binary systematic AMDS code with length $n$ and minimum distance
$d$. Proposition \ref{AMDSd=1} and Proposition \ref{AMDSd=2} produce examples of binary 
systematic AMDS codes
with the parameters of (a), (b) and (c). Notice that, for any $n \ge 5$ and $d
\ge 3$ odd, 
the parity-check code of a
code with length $n$ and minimum distance $d$
has length $n+1$ and minimum distance $d+1$.
As a consequence, it is enough to prove the theorem for the pairs $(5,3)$,
$(6,3)$
and $(7,3)$. For
each tern $[5,2^2,3]$, $[6,2^3,3]$ and $[7,2^4,3]$ we give in Table \ref{3ex}
a generator matrix of a binary linear systematic code having these parameters.
We point out that the code of parameters $[7,2^4,3]$ in the table is
the Hamming code $H(q=2,r=3)$ (see \cite{MS}, page 23). 
\end{proof}

\begin{table}[h!]
 \renewcommand{\arraystretch}{1.3}
\renewcommand{\tabcolsep}{1mm}
\centering
\caption{Binary systematic codes of parameters $[5,2^2,3]$, 
$[6,2^3,3]$ and $[7,2^4,3]$}
\label{3ex}
\begin{tabular}{|p{1.7cm}||c|c|c|}
$[n,2^k,d]\  \rightarrow$ & $[5,2^2,3]$ & $[6,2^3,3]$ & $[7,2^4,3]$ \\
\hline 
\hline
Generator matrix \ $\rightarrow$& $\begin{bmatrix}
0 & 1 & 0 & 1 & 1 \\
1 & 0 & 1 & 0 & 1
\end{bmatrix}$ & $\begin{bmatrix}
  0  & 0 &  1 &  0 &  1 &  1 \\
0  & 1  & 0 &  1  & 0  & 1 \\
1  & 0 &  0  & 1 &  1 &  0
 \end{bmatrix}$ & $\begin{bmatrix}
1 &  0 &  0  & 0 &  1 &  1 &  0 \\
0  & 1  & 0  & 0 &  1 &  0 &  1  \\
0  & 0 &  1 &  0  & 0  & 1 &  1  \\
0  & 0 &  0  &  1 &  1 &  1 &  1 
 \end{bmatrix}$ \\
\hline

\end{tabular}
\end{table}

\section{Weight distributions of binary systematic AMDS
codes}\label{sectionW}

Here we focus on the $W$-classification of binary systematic AMDS codes of minimum distance
at least three and more than two codewords. By
Lemma \ref{solicasi}, it is enough to study the weight distributions of codes
of parameters $(n,d) \in \{(5,3), (6,4), (6,3), (7,4), (7,3), (8,4)\}$. We will
treat the pairs $(5,3)$ and $(6,4)$ by using a computational approach (the
computations take only a few
seconds on a common 
laptop), and the other cases theoretically. The following lemma is
proved by
exhaustive 
research.

\begin{lemma}\label{AAA}
 The weight distribution of any binary systematic AMDS code of length $n$ and minimum
distance $d$, with $(n,d) \in \{
(5,3), (6,3)\}$, depends only
on $n$ and $d$, and it is given in Table \ref{WclaI}. Moreover, any
such a code is linear.

\begin{table}[h!]
 \centering
\renewcommand{\arraystretch}{1.3}
\renewcommand{\tabcolsep}{1mm}
\caption{Partial $W$-classification of binary systematic AMDS codes.}
\label{WclaI}
\begin{tabular}{c|p{7.3cm}}
 Values of $(n,d)$ & Non-zero weights of any binary systematic code $C$ with length $n$
and minimum distance $d$ \\
\hline
\hline
$(5,3)$ & $W_0(C)=1$, $W_3(C)=2$, $W_4(C)=1$  \\
\hline
$(6,3)$ & $W_0(C)=1$, $W_3(C)=4$, $W_4(C)=3$ \\
\hline
\end{tabular}
\end{table}
\end{lemma}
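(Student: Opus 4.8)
The plan is to prove the lemma by reducing each of the two cases to a finite, explicitly listable family of codes and then inspecting that family directly. Since any AMDS code has $2^{n-d}$ codewords, a binary systematic AMDS code with $(n,d)=(5,3)$ has $k=n-d=2$ information symbols, hence is of the form $C=\{(v,\varphi(v)):v\in\F_2^2\}$ for a systematic encoding function $\varphi:\F_2^2\to\F_2^3$ with $\varphi(0)=0$; likewise, for $(n,d)=(6,3)$ one has $k=3$ and $C=\{(v,\varphi(v)):v\in\F_2^3\}$ with $\varphi:\F_2^3\to\F_2^3$ and $\varphi(0)=0$. In the first case only the three nonzero inputs are free and each takes one of $2^3$ values, for a total of $8^3$ candidate functions; in the second case seven inputs are free, for $8^7$ candidates. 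Both families are finite and small enough to be enumerated exhaustively.

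First I would run over all candidate $\varphi$ in each family, build the code $C$, and compute all pairwise Hamming distances, keeping only those $\varphi$ with $d(C)=3$. For these parameters this last condition is exactly the AMDS condition: a surviving code automatically has $2^{n-d}$ codewords, codes with $d(C)\le 2$ are not AMDS of minimum distance $3$, and a code with $d(C)\ge 4$ would attain the Singleton bound and hence be MDS, which is impossible for the parameters $[5,2^2,4]$ and $[6,2^3,4]$ by Theorem \ref{thhh}. For each surviving $C$ I would then read off its weight distribution $\{W_i(C)\}$ and verify that it agrees with the corresponding row of Table \ref{WclaI}, and I would test linearity by checking that $C$ is closed under addition. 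That the surviving family is non-empty is already guaranteed by the explicit generator matrices in Table \ref{3ex}.

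To make the search transparent one can prune it using elementary necessary conditions before enumerating: writing codewords as $(v,\varphi(v))$, the demand $d(C)\ge 3$ forces $\mathrm{wt}(v)+\mathrm{wt}(\varphi(v))\ge 3$ for every nonzero $v$ and $d(v,w)+d(\varphi(v),\varphi(w))\ge 3$ for every pair, which already fixes $\mathrm{wt}(\varphi(v))$ on the low-weight inputs and cuts the admissible $\varphi$ to a handful; the computation then finishes in seconds. The one genuinely non-obvious point, and the place where the exhaustive check earns its keep, is the closing assertion that every admissible code is \emph{linear}: nothing in the hypotheses rules out a nonlinear $\varphi$ a priori, and the substance of the lemma is that for these tight parameters the distance constraints are rigid enough that no nonlinear encoding function survives, so that uniqueness of the weight distribution and linearity emerge together from the enumeration.
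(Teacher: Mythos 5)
Your proposal is correct and matches the paper's approach: the paper proves Lemma \ref{AAA} by exhaustive computer search over exactly this finite family, offering no further theoretical argument. Your added justification that filtering on $d(C)=3$ alone captures the AMDS condition (via Theorem \ref{thhh} ruling out the MDS case) is sound and merely makes explicit what the paper leaves implicit.
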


Now we focus on the other pairs $(n,d) \in \{ (6,4), (7,4), (7,3), (8,4)\}$
from a theoretical viewpoint. We notice that
exhaustive search does not produce any result in a resonable time on a common
computer when analyzing the cases $(n,d)=(7,3)$ and $(n,d)=(8,4)$. Let us first
recall the definition of weight distribution of a code.

\begin{definition}\label{dd}
 Let $C \subseteq \F_q^n$ be a code over a finite field $\F_q$. For any $i \in
\{0,1,...,n\}$ define the integer $B_i(C)$ by  $|C| \cdot B_i(C):=
|\{ (v,w) \in C^2 : d(v,w)=i\}|$.
The collection ${\{ B_i(C)\}}_{i=0}^{n}$ is called the \textbf{distance
distribution} of $C$.
\end{definition}

\begin{remark} \label{distwe}
 In the notation of Definition \ref{dd}, if $C$ is a linear code then its weight
distribution and its distance distribution agree, i.e., $W_i(C)=B_i(C)$
for any $i \in \{0,1,...,n\}$.
\end{remark}

\begin{lemma}\label{BBB}
A binary systematic AMDS code $C$ of length $n=7$ and minimum distance $d=3$
has the following weight and distance distribution.
\begin{table}[h!]
\centering
\renewcommand{\arraystretch}{1.3}
\renewcommand{\tabcolsep}{1mm}
\begin{tabular}{l|lcl|l}
$W_0(C)=1$ & $W_4(C)=7$ & \ \ \ \ \ \ \ \ \ \ & $B_0(C)=1$ & $B_4(C)=7$ \\
$W_1(C)=0$ & $W_5(C)=0$ & & $B_1(C)=0$ & $B_5(C)=0$ \\
$W_2(C)=0$ & $W_6(C)=0$ & & $B_2(C)=0$ & $B_6(C)=0$ \\
$W_3(C)=7$ & $W_7(C)=1$ & & $B_3(C)=7$ & $B_7(C)=1$
\end{tabular}
\end{table}
\end{lemma}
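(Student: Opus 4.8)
The plan is to show that $C$ is a perfect single-error-correcting code and to read off both distributions from the resulting sphere-covering identities, using nothing beyond double counting. Since $C$ is AMDS with $d=3$ we have $|C| = 2^{n-d} = 2^{4} = 16$, and since $C$ is systematic it contains $0$, so $W_0(C)=1$ and (as $d(C)=3$) $W_1(C)=W_2(C)=0$. First I would note that the Hamming balls of radius one centred at the $16$ codewords are pairwise disjoint, because $d(C)=3$, and that each such ball contains $1+7=8$ points; as $16\cdot 8 = 128 = 2^{7}$, these balls partition $\F_2^{7}$. Thus $C$ meets the Hamming bound (\cite{NTV}, Theorem 1.1.47) with equality and is a perfect code.

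Next I would exploit this partition by counting, for each $w\in\{0,1,\dots,7\}$, the pairs $(x,c)$ with $x\in\F_2^{7}$ of weight $w$, $c\in C$, and $d(x,c)\le 1$. Every $x$ lies in exactly one ball, so the number of such pairs is $\binom{7}{w}$. Counting instead by the weight of $c$, a codeword of weight $w$ contributes only $x=c$, a codeword of weight $w-1$ contributes each of its $8-w$ weight-$w$ neighbours, and a codeword of weight $w+1$ contributes each of its $w+1$ weight-$w$ neighbours. Equating the two counts gives, for every $w$,
\begin{equation*}
\binom{7}{w} = W_w(C) + (8-w)\,W_{w-1}(C) + (w+1)\,W_{w+1}(C),
\end{equation*}
with the convention $W_{-1}(C)=W_8(C)=0$. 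Feeding $W_0(C)=1$, $W_1(C)=W_2(C)=0$ into these identities for $w=2,3,4,5,6$ in turn determines $W_3(C),W_4(C),W_5(C),W_6(C),W_7(C)$ one at a time, and the unique solution is exactly the weight distribution in the statement. No linearity is used anywhere.

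Finally, for the distance distribution I would observe that the entire argument is translation invariant. For a fixed $u\in C$ the translate $C-u$ again contains $0$, has $16$ codewords, length $7$ and minimum distance $3$, hence is perfect, and by the computation above its weight distribution coincides with that of $C$. Since $\mbox{wt}(c-u)=d(c,u)$, the number of codewords at distance $i$ from $u$ equals $W_i(C)$ for every $u$; averaging over $u\in C$ yields $B_i(C)=W_i(C)$ for all $i$, as claimed. I do not expect a serious obstacle here: the only delicate point is the bookkeeping of the boundary terms in the double count and checking that the recursion is forced rather than underdetermined, which the computation above confirms.
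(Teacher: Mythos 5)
Your proof is correct, but it takes a genuinely different route from the paper's. The paper also begins by observing that $|C|=2^{4}=16$ forces $C$ to meet the Hamming bound, hence to be perfect; but from there it simply invokes Theorem 37 of MacWilliams--Sloane (p.~182) and the remark following it, which assert that any binary perfect code with these parameters has the same weight (and distance) distribution as the Hamming code $H(q=2,r=3)$, whose distribution is well known. You instead re-derive the distribution from scratch: the sphere-partition double count giving $\binom{7}{w} = W_w(C) + (8-w)W_{w-1}(C) + (w+1)W_{w+1}(C)$ is exactly the right identity, and feeding in $W_0(C)=1$, $W_1(C)=W_2(C)=0$ does force $W_3(C)=W_4(C)=7$, $W_5(C)=W_6(C)=0$, $W_7(C)=1$ uniquely (I checked the recursion at every $w$; it is consistent and determined). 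Your translation argument for $B_i(C)=W_i(C)$ is also sound, since perfectness, the cardinality, the length and the minimum distance are all translation invariant, and your weight-distribution computation used nothing else about $C$ (in particular not systematicity). What each approach buys: the paper's citation is shorter but leans on a non-elementary classification-type result about perfect codes, whereas your argument is fully self-contained and elementary, which is arguably more in the spirit of the paper's stated aim; it also handles the distance distribution explicitly, a point the paper's proof passes over somewhat quickly.
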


\begin{proof}
We clearly have  $|C|=2^4=16$, and so the parameters of $C$ 
attain the Hamming bound (\cite{NTV}, Theorem 1.1.47). Such a code is said to be a perfect code 
(see
\cite{MS}, Chapter 6 and \cite{L}, Chapter 11). By \cite{MS}, Theorem 37 at page 182 and the
 following
remark, $C$ has the same weight distribution of the well-known Hamming code
$H(q=2,r=3)$ of parameters $[7,2^4,3]$ (see \cite{MS}, pag. 23). The weight
distribution of this simple linear code is well-known.
\end{proof}

The following result is immediate.
\begin{lemma}\label{tecn}
 Let $n$ be a positive integer and let $v,w \in \F_2^n$. Then
$d(v,w)=\mbox{wt}(v)+\mbox{wt}(v)-2v \cdot w$, where
$v \cdot w= |\{1 \le i \le n : v_i=w_i=1 \}|$. In particular, the integer
$\mbox{wt}(v) -\mbox{wt}(w)$ is odd if and only if  $d(v,w)$ is odd.
\end{lemma}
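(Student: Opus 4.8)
The plan is to prove the identity by sorting the $n$ coordinates according to the joint behavior of $v$ and $w$, which is possible precisely because every entry lies in $\F_2$. Concretely, I would partition $\{1,\dots,n\}$ into the four sets determined by the pair $(v_i,w_i) \in \{(0,0),(0,1),(1,0),(1,1)\}$, and denote their cardinalities by $a$, $b$, $c$, $e$ respectively, so that $a+b+c+e=n$.

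With this notation every quantity appearing in the statement becomes a sum of two of these counts. Indeed $\mbox{wt}(v)=c+e$ and $\mbox{wt}(w)=b+e$, since $v$ (resp.\ $w$) is $1$ exactly on the classes whose first (resp.\ second) entry is $1$; moreover $v\cdot w = e$ by the definition of the product used here, and $d(v,w)=b+c$ because $v$ and $w$ differ exactly on the coordinates in the $(0,1)$ and $(1,0)$ classes. Substituting these into the right-hand side yields $\mbox{wt}(v)+\mbox{wt}(w)-2(v\cdot w)=(c+e)+(b+e)-2e=b+c=d(v,w)$, which is the first assertion.

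For the parity statement I would reduce the identity modulo $2$. From $d(v,w)=\mbox{wt}(v)+\mbox{wt}(w)-2(v\cdot w)$ we obtain $d(v,w)\equiv \mbox{wt}(v)+\mbox{wt}(w)\pmod 2$, and since $-\mbox{wt}(w)\equiv \mbox{wt}(w)\pmod 2$ we also have $\mbox{wt}(v)-\mbox{wt}(w)\equiv \mbox{wt}(v)+\mbox{wt}(w)\pmod 2$. Hence $d(v,w)$ and $\mbox{wt}(v)-\mbox{wt}(w)$ have the same parity, which is exactly the claimed equivalence. I do not expect any genuine obstacle: the result is elementary, and the only point requiring a little care is the bookkeeping that places each coordinate into exactly one of the four classes, so that the four counts sum to $n$ and the four displayed expressions are correct. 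Once that partition is in place, both parts follow by direct substitution and a single reduction modulo $2$.
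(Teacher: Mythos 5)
Your proof is correct and is essentially the canonical argument; the paper itself offers no proof, declaring the lemma ``immediate,'' so your four-way partition of the coordinates by the pair $(v_i,w_i)$ and the reduction modulo $2$ simply fill in the intended details. Note only that you have (rightly) read the statement as $d(v,w)=\mbox{wt}(v)+\mbox{wt}(w)-2\,v\cdot w$; the paper's displayed formula contains a typo, writing $\mbox{wt}(v)$ twice in place of $\mbox{wt}(v)+\mbox{wt}(w)$.
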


\begin{theorem}[$W$-classification]
\label{thhh2}
 Any binary systematic AMDS code of minimum distance at least three and
cardinality at least four has exactly one
of the weight distributions listed in Table \ref{Wcla}. Moreover, each of those weight
distribution corresponds to a binary systematic AMDS code.

\begin{table}[h!]
\centering
\renewcommand{\arraystretch}{1.3}
\renewcommand{\tabcolsep}{1mm}
\caption{Weight distribution of any binary systematic AMDS
codes $C$ with minimum distance at least three and cardinality at least
four.}
\label{Wcla}
\begin{tabular}{r||c|c|c|c|c|c|c|}
$[n,2^k,d] \rightarrow$ &$[5,2^2,3]$ & $[6,2^2,4]$ & $[6,2^3,3]$ & $[7,2^3,4]$ &
$[7,2^4,3]$ & $[8,2^4,4]$ \\
\hline
\hline
$W_0(C)$ & $1$ & $1$ & $1$ & $1$ &  $1$& $1$ \\
$W_1(C)$ & $0$ & $0$ & $0$ & $0$ &  $0$& $0$ \\
$W_2(C)$ & $0$ & $0$ & $0$ & $0$ &  $0$& $0$ \\
$W_3(C)$ & $2$ & $0$ & $4$ & $0$ &  $7$& $0$ \\
$W_4(C)$ & $1$ & $3$ & $3$ & $7$ &  $7$& $14$ \\
$W_5(C)$ & $0$ & $0$ & $0$ & $0$ &  $0$& $0$ \\
$W_6(C)$ & -   & $0$ & $0$ & $0$ &  $0$& $0$ \\
$W_7(C)$ & -   & -   &  -  & $0$ &  $1$& $0$ \\
$W_8(C)$ & -   & -   & -   & -   &  -  & $1$ \\
\hline
\end{tabular}
\end{table}
\end{theorem}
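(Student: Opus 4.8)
The plan is to first reduce to the six admissible pairs. By Lemma \ref{solicasi}, a binary systematic AMDS code of minimum distance $d\ge 3$ and cardinality $2^k\ge 4$ has $k\in\{2,3,4\}$ with $d\in\{3,4\}$, so $(n,d)$ is one of $(5,3),(6,4),(6,3),(7,4),(7,3),(8,4)$. The odd-distance pairs $(5,3),(6,3)$ are settled by Lemma \ref{AAA} and $(7,3)$ by Lemma \ref{BBB}, so it remains to treat the three even-distance pairs. For each of these I would pass to a distance-three code by puncturing, import the already known weight distribution there, and lift it back to $C$.

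First I would puncture. Let $C$ be such a code with $d=4$, write each codeword as $(c',b)$ with $b\in\F_2$ the last coordinate, and let $C'$ be obtained by deleting that coordinate. Two codewords of $C$ cannot agree after deletion (they would be at distance one), so $C'$ has $2^k$ codewords, and since the deleted coordinate is a redundancy coordinate ($n-k=4$) the systematic shape is preserved. Thus $C'$ has parameters $[\,n-1,2^k,d'\,]$ with $d'\in\{3,4\}$. If $d'=4$ then $C'$ would meet the Singleton bound, i.e.\ be a binary MDS code $[\,n-1,2^{n-4},4\,]$, which is excluded by Theorem \ref{thhh}; hence $d'=3$ and $C'$ is a binary systematic AMDS code of parameters $(n-1,3)$, whose weight distribution is supplied by Lemma \ref{AAA} or Lemma \ref{BBB}.

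Next I would lift. Since $\mbox{wt}(c',b)=\mbox{wt}(c')+b$, the weight distribution of $C$ is governed entirely by the bit $b=b(c')$ attached to each $c'\in C'$. The minimum distance of $C$ being $4$ forces $b(c_1')\neq b(c_2')$ whenever $d(c_1',c_2')=3$, since such a pair would otherwise remain at distance three in $C$. By Lemma \ref{tecn} the parity function $c'\mapsto\mbox{wt}(c')\bmod 2$ enjoys the same property and agrees with $b$ at $0$ (because $0\in C$). Hence $b$ and the parity function are two proper $2$-colourings, coinciding at $0$, of the graph on $C'$ whose edges join the pairs at distance three; once this graph is connected they are equal, so every codeword of $C$ has even weight and $C$ is exactly the parity-check extension of $C'$. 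The weight distribution of $C$ is then the image of that of $C'$ under the map sending each weight $w$ to $w$ if $w$ is even and to $w+1$ if $w$ is odd, which yields precisely the columns $[5\!\to\!6]$, $[6\!\to\!7]$ and $[7\!\to\!8]$, i.e.\ the entries $[6,2^2,4]$, $[7,2^3,4]$, $[8,2^4,4]$ of Table \ref{Wcla}.

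The main obstacle is the connectivity of this distance-three graph, and it is genuinely essential: a disconnected graph would admit a colouring differing from the parity on one component, producing another systematic AMDS code of the same parameters with odd-weight codewords and a different weight distribution. For the pairs feeding $(6,4)$ and $(7,4)$ connectivity is painless, since by Lemma \ref{AAA} the code $C'$ is linear and the graph is the Cayley graph with connecting set the weight-three codewords; a dimension count (two distinct weight-three words cannot lie in a line, four cannot lie in a plane) shows these span $C'$. For the pair feeding $(8,4)$ the code $C'$ is only known to share the distance distribution of the Hamming code (Lemma \ref{BBB}), so I would establish connectivity from $B_3(C')=7$ together with the perfect-code structure rather than from linearity. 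Finally, realizability is immediate: Theorem \ref{thhh1} exhibits a code for each pair, and the parity-check extensions of the examples of Table \ref{3ex} (the extended Hamming code in the case $(8,4)$) realise the three even-distance distributions; combined with the uniqueness just proved, every such code has the tabulated weight distribution.
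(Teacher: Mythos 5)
Your proposal follows essentially the same route as the paper: puncture the last coordinate, use Theorem \ref{thhh} to rule out the MDS case so that the punctured code is a systematic AMDS code of parameters $(n-1,3)$, and then show the appended bit must be the parity bit, so that the weight distribution lifts from Lemmas \ref{AAA} and \ref{BBB} exactly as in your even/odd weight map; your graph--colouring language is just a repackaging of the paper's direct verification that every weight-$4$ codeword lies at distance $3$ from a weight-$3$ codeword. The one soft spot is the connectivity argument for $[8,2^4,4]$: the ingredients you cite ($B_3(C')=7$ and perfectness) do not by themselves force connectivity of the distance-three graph (a $7$-regular graph on $16$ vertices can be two disjoint copies of $K_8$), so you must, as the paper does, use the full distribution of Lemma \ref{BBB} — in particular $B_5(C')=0$ and $W_4(C')=7$ — to conclude that each weight-$4$ codeword is at distance $3$ from some weight-$3$ codeword and that the all-ones word is at distance $3$ from the weight-$4$ codewords.
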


\begin{proof}
 Combining Theorem \ref{thhh1}, Lemma \ref{AAA} and Lemma \ref{BBB}, it is
enough to show that any binary systematic AMDS code of parameters $(n,d) \in
\{(6,4), (7,4), (8,4) \}$ is the parity-check code of an AMDS systematic code
of parameters $(n-1,d-1)$. Let $C$ be a binary systematic AMDS code of
parameters $(n,d) \in
\{(6,4), (7,4), (8,4) \}$. Denote by $E$ the code obtained by removing from the
codewords of $C$ the last component. Notice that $E$ is either an MDS code, or
a systematic AMDS code. By Theorem \ref{thhh}, the first case is ruled out.
Hence $E$ is a systematic AMDS code of parameters $(n-1,d-1) \in \{ (5,3),
(6,3), (7,3)\}$ (respectively). Clearly, there exists a function $f:E \to \F_2$
such that $C= \{ (e,f(e) : e \in E) \}$. We will prove that $f$ is the
parity-check funtion on $E$, examining the three cases separately.
\begin{enumerate}
 \item Assume $(n,d)=(6,4)$, so that $E$ has length $n-1=5$ and minimum distance
$d-1=3$. We clearly have $f(0)=0$. By Lemma \ref{AAA}, $E$ has two codewords of
weight three and one of weight four. Let $w \in E$ of weight three. Since $C$
has minimum distance 4, $f(w)=1$. Let $w'$ be the codeword of $E$
of weight 4, and fix $w \in E$ of weight 3. By Lemma \ref{tecn}, we have
$d(w',w) \in \{3,5\}$. If $d(w',w)=5=n-1$, then $w=(1,1,1,1,1)-w'$, which
contradicts $\mbox{wt}(w)=4$. So $d(w',w)=3$. Since $C$ has minimum distance
4 and $f(w)=1$, we must have $f(w')=0$.
\item Assume $(n,d)=(7,4)$, so that $E$ has length $n-1=6$ and minimum distance
$d-1=3$. Again, $f(0)=0$. By Lemma \ref{AAA}, $E$
has four codewords of weight 3 and three of weight 4. Since $C$ has
minimum distance 4, we have $f(w)=1$ for any $w \in E$ of weight 3. Now
fix a codeword $w \in E$ of weight 3 and let $w' \in E$ be any codeword of
weight 4. By Lemma \ref{tecn}, we have $d(w',w) \in \{ 3,5\}$. The case
$d(w',w)=5$ is ruled out by Remark \ref{distwe}. Indeed, Lemma \ref{AAA}
states that $E$ is
linear, and so its distance distribution agrees with its weigh distribution
(given in  Lemma \ref{AAA}). As a consequence, there are no codewords in $E$
whose Hamming distance is five. Since $C$
has minimum distance 4 and $f(w)=1$, we must have $f(w')=0$.
\item Assume $(n,d)=(8,4)$, so that $E$ has length $n-1=7$ and minimum distance
$d-1=3$. We clearly have $f(0)=0$. Since
$d(C)=4$, we get
$f(w)=1$ for any $w \in E$ of weight 3. Let $w' \in
E$ be any
codeword of weight 4 and $w \in E$ a fixed codeword of weight 3. By Lemma
\ref{tecn} and Lemma \ref{BBB},
we have $d(w',w) \in \{3,7\}$. The
case $d(w',w)=7$ is easily ruled out. Since  $f(w)=1$ and
$C$ has minimum distance 4, we have $f(w')=0$. Finally, fix a codeword $w'
\in E$ of weight 4. We have $d(w',\underbrace{1111111}_7)=3$. Since
$f(w')=0$, it is clear that $f(\underbrace{1111111}_7)=1$.
\end{enumerate}

\end{proof}

\begin{remark}
We notice that the $W$-classification of Theorem \ref{thhh2} may be obtained also in the following way. Define an \textbf{isometry} on $\F_q^n$ as a map
$i:\F_q^n \to \F_q^n$ preserving the Hamming distance between elements of $\F_q^n$. Codes $C,D \subseteq \F_q^n$ are said to be \textbf{isometric} if
$D=i(C)$ for some isometry $i$. Combining \cite{O}, Theorem 7.17 and \cite{O},
Table 7.2, we easily see that for any
pair $(n,d) \in \{ (5,3), (6,4), (6,3), (7,4), (8,3), (8,4)\}$
there exists a unique, up to isometry, binary code of length $n$, minimum
distance $d$, and $2^{n-d}$ codewords. 
Since isometric codes have the same distance distribution, we get that 
the $W$-classification of binary systematic AMDS codes must produce a
unique equivalence class for each pair $(n,d)$ in the list. As a
consequence, it is enough to compute the weight distribution of just one code
for each pair
in order to get the whole $W$-classification.
On the other hand, we notice that the proof here proposed uses elementary
techniques, while the classification of \cite{O} refers to 
non-trivial results of design and group theory.
\end{remark}

\section*{Acknowledgment}
The author would like to thank Emanuele Bellini, Elisa Gorla, Simon Litsyn and
Massimiliano Sala
for useful suggestions.

\end{document}